\theoremstyle{plain}
\def\beq{\begin{equation}}   
\def\eeq{\end{equation}}   
\def\beqn{\begin{eqnarray}}
\def\erf{\text{erf}} 
\def\eeqn{\end{eqnarray}}
\def\utilde#1{\mathord{\vtop{\ialign{##\crcr   
$\hfil\displaystyle{#1}\hfil$\crcr\noalign{\kern1.5pt\nointerlineskip}   
$\hfil {}\hfil$\crcr\noalign{\kern1.5pt}}}}}   
\theoremstyle{definition}   
\newtheorem{theorem}{Theorem}[section]
\newtheorem*{proposition*}{Proposition}   
\theoremstyle{remark}
\newtheorem*{example*}{\bf Example}   
\newtheorem{example}{\bf Example}
\theoremstyle{definition}
\def\R{\mathbb R}
\def\to{\rightarrow}   
\def\dlim[#1][#2]{\lim_{#1 \to #2, #1 \neq #2}}
\title{Comparison Between Bayesian and Frequentist Tail Probability Estimates}
\author{Nan Shen} 
\address{  Department of Mathematics, University of Utah, 155 South 1400 East, JWB 233
Salt Lake City, UT 84112.  \textup{\tt nan.shen@utah.edu}}
\author{Bárbara González-Ar\'evalo}
\address{College of Liberal Arts and Sciences, Northern Illinois University, 1425 Lincoln Hwy, DeKalb, IL 60115. \textup{\tt bgonzalez4@niu.edu}}
\author{Luis Ra\'ul Pericchi}
\address{ Department of Mathematics, University of Puerto Rico, Rio Piedras Campus, Box 70377  San Juan, PR 00936-8377. \textup{\tt luis.pericchi@upr.edu}}
\begin{document}

\maketitle

\begin{abstract}
    Tail probability plays an important part in the extreme value theory. Sometimes the conclusions from two approaches for estimating the tail probability of extreme events, the Bayesian and the frequentist methods, can differ a lot. In 1999, a rainfall that caused more than 30,000 deaths in Venezuela was not captured by the simple frequentist extreme value techniques. However, this catastrophic rainfall was not surprising if the Bayesian inference was used to allow for parameter uncertainty and the full available data was exploited \cite{cp}.
    
    In this paper, we investigate the reasons that the Bayesian estimator of the tail probability is always higher than the frequentist estimator. Sufficient conditions for this phenomenon are established both by using Jensen's Inequality and by looking at Taylor series approximations, both of which point to the convexity of the distribution function.
    
    \vspace{5mm}
    
\noindent \textbf{Keywords.} Tail Probability, Taylor Series, Jensen's Inequality, Convexity

\end{abstract}

\section{Introduction}
Tragedies like the 9/11 attacks, earthquakes or volcanic eruptions are rare events, but they are always followed by catastrophic consequences. Estimating the probabilities of extreme events has become more important and urgent in recent decades \cite{terr}. Both large deviations theory \cite{timo} and extreme value theory, which is widely used in disciplines like actuarial science, environmental sciences and physics\cite{book3}, investigate both the theoretical and practical problems arising from rare events \cite{book1} \cite{book2}. 

With the popularization of Bayesian statistics, we now have two approaches for evaluating the probability of the tail: Bayesian and frequentist \cite{smith} \cite{bandf} \cite{bayes}. However, these two methods sometimes give different conclusions. As the case shows in \cite{cp}, before 1999 simple frequentist extreme value techniques were used to predict future levels of extreme rainfall in Venezuela. In December 1999, daily precipitation of more than 410 mm, almost three times the daily rainfall measurements of the previously recorded maximum, was not captured which caused an estimated 30,000 deaths.   Figure \ref{cp1} in \cite{cp} shows that the precipitation of 1999 is exceptional even relative to the better fitting model under the frequentist MLE method.  However, Figure \ref{cp2} gives that the 1999 event can be anticipated if we use Bayesian inferences to fully account for the uncertainties due to parameter estimation and exploit all available data. Table \ref{table:1} which is taken from \cite{cp2} gives the return level estimates of 410.4 mm, the 1999 annual maximum, using different models. From the table we can also see that the Bayesian method gives a way much smaller return period that the frequentist Maximum Likelihood Estimation(MLE) under different models.

\begin{table}[h!]
\centering
\begin{tabular}{ |c|c|c|c| } 
\hline
\multicolumn{4}{|c|}{Return period of 410.4 mm} \\
\hline
Mode of Inference & Model & 1999 data excluded  & 1999 datum included\\
\hline
\multirow{2}{4em}{MLE} & Gumbel & 17,600,000 & 737,000\\ 
& GEV & 4280 & 302 \\ 
\hline
\multirow{2}{4em}{Bayes} & Gumbel & 2,170,000 & 233,000\\ 
& GEV & 660 & 177 \\ 
\hline
\end{tabular}
\caption{Return level estimates of 410.4 mm, the 1999 annual maximum, using different models
and modes of inference. For the MLE analysis the values correspond to the maximum likelihood
estimates of the return period. For the Bayesian analysis the values are the predictive return periods. From \cite{cp2} (with permission)}
\label{table:1}
\end{table}

\begin{figure}
\centering\includegraphics[width=10cm]{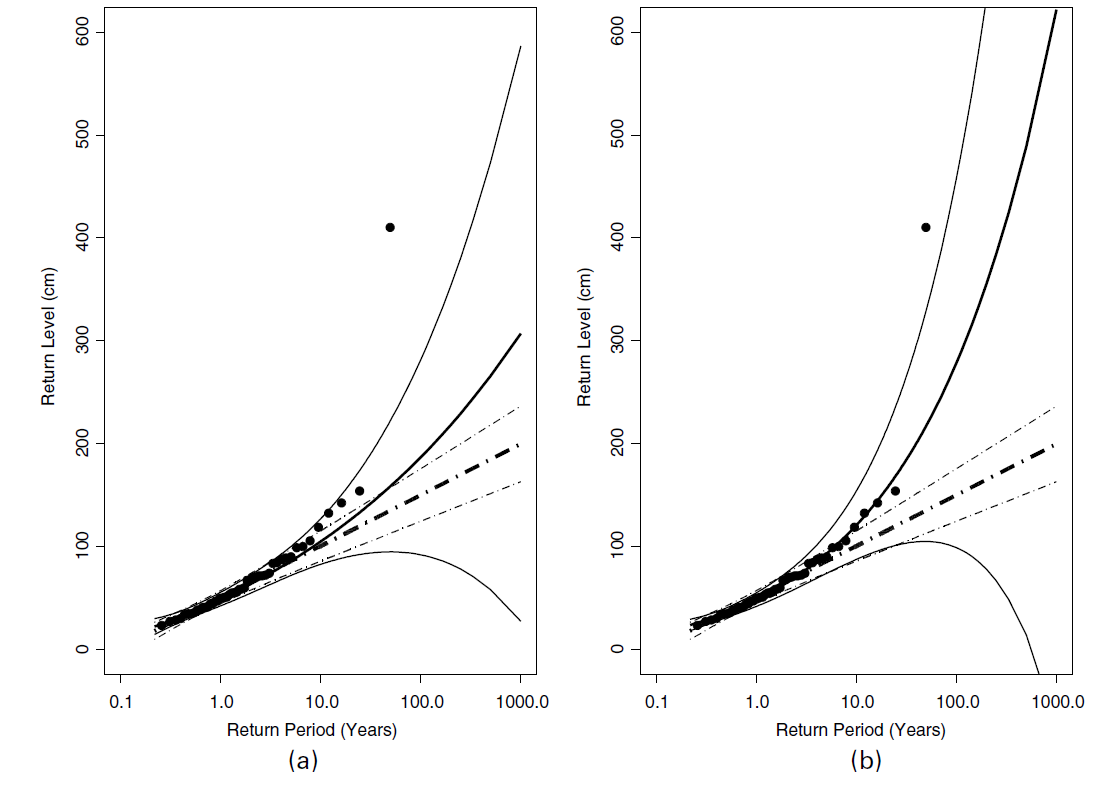}
\caption{Return level plots for models fitted to the annual maximum Venezuelan rainfall data( $\rule[0.5mm]{10mm}{0.3mm}$, GEV model, maximum likelihood estimates; $\rule[0.5mm]{10mm}{0.1mm}$, GEV model, limits of the 95$\%$ confidence intervals; $\hdashrule[0.5ex]{2cm}{0.4mm}{2.5mm 3pt 0.3mm 2pt}$, Gumbel model, maximum likelihood estimates; $\hdashrule[0.5ex]{1.5cm}{0.1mm}{1.5mm 3pt 0.3mm 2pt}$, Gumbel model, limits of the 95$\%$ confidence intervals; $\bullet$, empirical estimates based on the complete 49-year data set): (a) excluding the 1999 data; (b) including the 1999 data. }
\label{cp1}
\end{figure}

\begin{figure}
\centering\includegraphics[width=10cm]{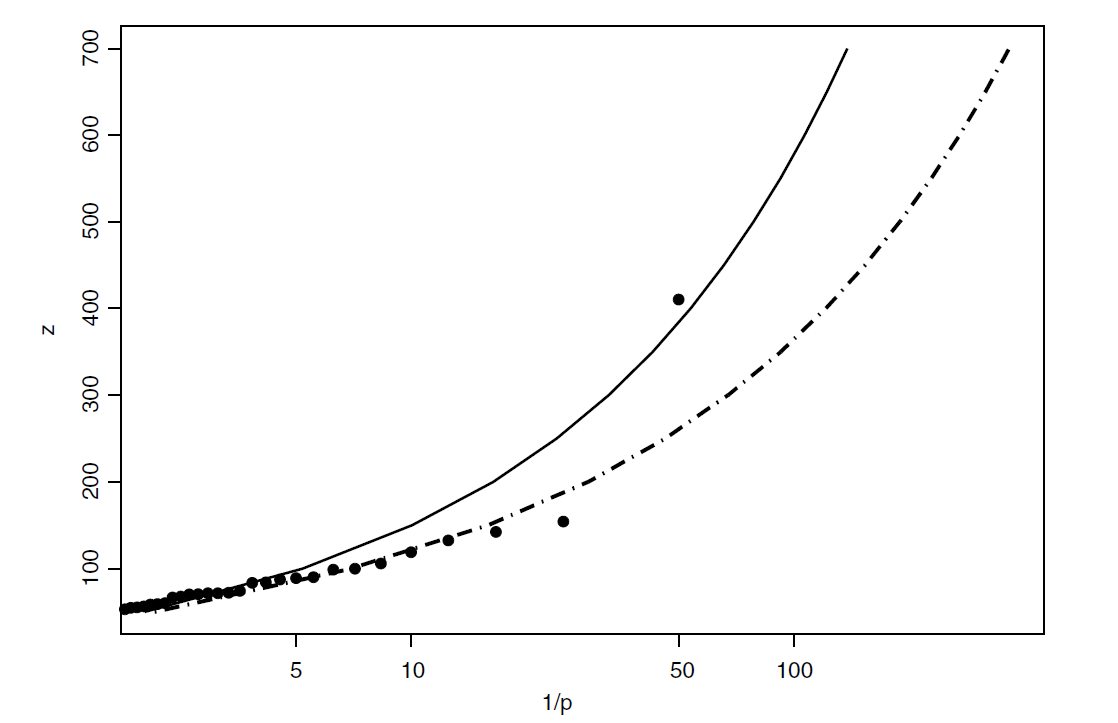}
\caption{Predictive distributions for seasonal models fitted to the Venezuelan rainfall data: $\rule[0.5mm]{7mm}{0.1mm}$, including the 1999 data; $\hdashrule[0.5ex]{2cm}{0.4mm}{2.5mm 3pt 0.3mm 2pt}$, excluding the 1999 data; $\bullet$, empirical estimates based on the 49 available annual maxima.}
\label{cp2}
\end{figure}

The lesson we learned from \cite{cp2} and \cite{cp} is the motivation for our research. Reasons why these two methods give huge different results, especially why the Bayesian model usually gives a larger probability of the tail than the classic frequentist method need to be investigated here. The Bayesian estimation of probability of tails is well founded on Probability Theory: It is a marginal computation that integrates out the parameters of the tail. On the other hand, the ``plug-in'' insertion a point estimate of the parameters is obviously an ``ad-hoc'' procedure not based on probability calculus but on an approximation, that may be close to the correct calculation at the center of the range but that deteriorates spectacularly as we move away to the interesting tails, where extreme values occur.

\subsection{Definitions and Goal}
Let $X$ be the random variable. Suppose $X$ indicate the magnitude and intensity of an earthquake, then the tail probability $P(X > a)$ identifies the probability of an earthquake occurrence when $a$ is some value much greater than the mean.   The Bayesian estimator of this probability is defined as
$$P_B(X>a)=\int_{\boldsymbol{\Theta}}[1-F(a|\boldsymbol{\theta})]\pi(\boldsymbol{\theta}|a)d\boldsymbol{\theta},$$
which is the expectation of the tail probability $1-F(a|\boldsymbol{\theta})$ under the posterior distribution $\pi(\boldsymbol{\theta}|x)$ given $x=a$, where $\boldsymbol{\theta}$ denotes the parameters in the distribution function and could be one dimension or generalized to a high dimensional vector. The frequentist estimator is also called the plug-in estimator which is defined as
$$P_F(X>a)=1-F(a|\hat{\boldsymbol{\theta}}),$$ 
where $\hat{\boldsymbol{\theta}}$ is the Maximum Likelihood Estimator(MLE) of $\boldsymbol{\theta}$.

Numerical experimental results point to the fact that the asymptotic behavior of the Bayesian estimator of the tail probability is usually higher than the frequentist estimator. We will use a very simple hierarchical model as an example to illustrate this phenomenon. Suppose we have a sequence of random variable $\boldsymbol{x}=x_1, \cdots, x_n$ that follows the exponential distribution. The density of the exponential distribution is given by
$$f(x|\lambda) = \frac{1}{\lambda}e^{-x/\lambda}$$
where $x \ge 0$ and $\lambda > 0$. So the tail probability is
$$\varphi(\lambda)=1-F(a|\lambda)=\int^{\infty}_a f(x|\lambda)dx=e^{-\frac{a}{\lambda}}$$
The marginal distribution can be calculated as 
\begin{equation*}
m({\bf x})=\int^{\infty}_0f({\bf x}|\lambda)\pi_J(\lambda)d\lambda = \Gamma(n)\Big(\sum_{i=1}^{n} x_i \Big) ^{-n},
\end{equation*}
where we use Jeffereys prior as $\pi_J(\lambda) \propto 1/ \lambda$.
By which the posterior distribution is obtained as
$$\pi(\lambda|{\bf x})=\frac{1}{\lambda^{(n+1)}\Gamma(n)}\exp\left(-\frac{\sum^n_{i=1}x_i}{\lambda}\right)\left(\sum^n_{i=1}x_i\right)^n$$
Then the Bayesian estimator of the tail probability is 
$$P_B(X>a)=E[\varphi(\lambda)|\boldsymbol{x}]=\int^{\infty}_{0}\varphi(\lambda) \pi(\lambda|\boldsymbol{x})d\lambda=\left(1+\frac{a}{n\bar{x}}\right)^{-n}$$
And the frequentist estimator of the tail probability is
$$P_F(X>a)=\varphi(\hat{\lambda}=\bar{x})=e^{-\frac{a}{\bar{x}}}$$
Note that $P_B$ goes to zero at a polynomial rate, whereas $P_F$ goes to zero at an exponential rate. So, they are not even asymptotically equivalent, and thus it should be the case that:
$$\frac{P_B(X>a)}{P_F(X>a)} \to \infty\  \ \text{ as } a \to 
\infty$$
We conduct the numerical experiments and plot the ratio of $P_B$ and $P_F$. The results below show the ratio for different range of $a$. We can see that when $a$ is some moderate number between 50 and 100. The ratio is greater than 1 and increases slowly. However, when $a$ varies from 500 to 1000, the ratio goes to extremely large number very quick. In other words, $\frac{P_B(X>a)}{P_F(X>a)}\to \infty \text{ as } a \to \infty$ which implies that the Bayesian estimate is higher than the frequentist estimate asymptotically.

\begin{figure}
  \begin{subfigure}{8cm}
    \centering\includegraphics[width=8cm]{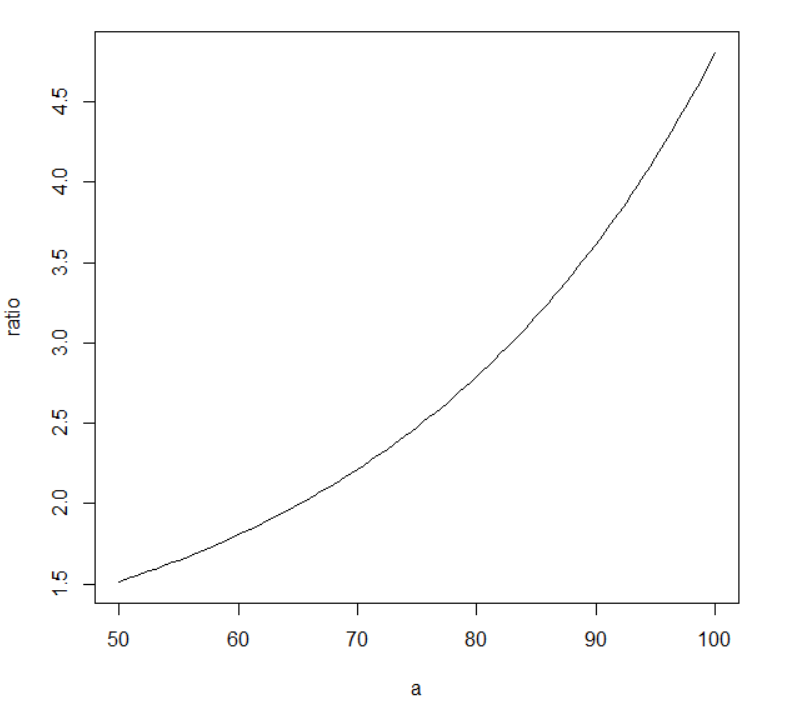}
    \caption{The ratio of $P_B$ and $P_F$ when $a$ is moderate.}
  \end{subfigure}
  \begin{subfigure}{8cm}
    \centering\includegraphics[width=8cm]{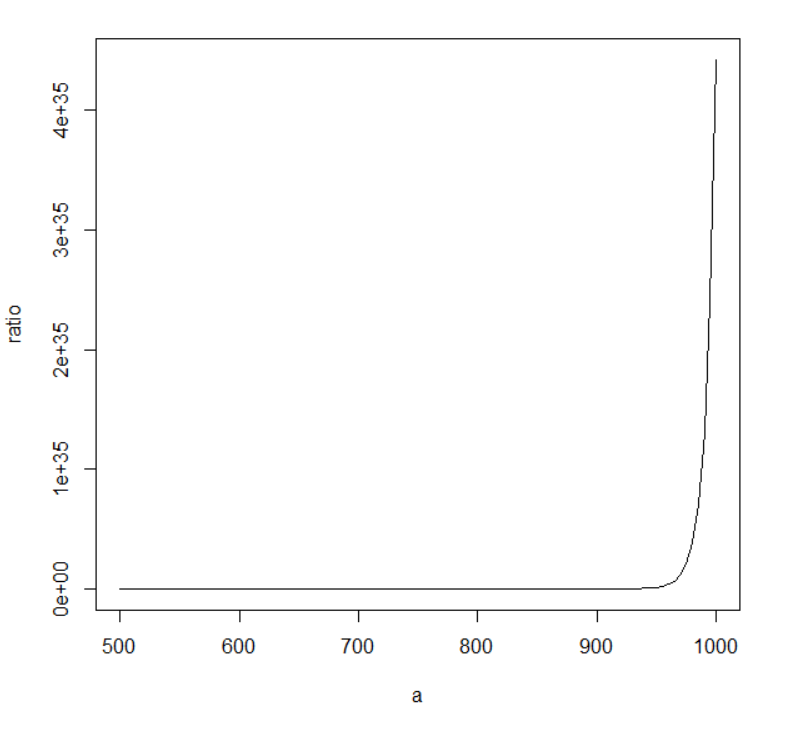}
    \caption{The ratio of $P_B$ and $P_F$ when $a$ is large.}
  \end{subfigure}
\caption{Plots of $\frac{P_B(X>a)}{P_F(X>a)}$ for different $a$. }
\end{figure}

We investigate the reasons for this behavior and the conditions under which it happens. We first use Jensen's inequality  \cite{convex} \cite{statstext}, which states that if $\varphi(\boldsymbol{\theta})=1-F(a|\boldsymbol{\theta})$ is a convex function of the random variable $\boldsymbol{\theta}$, then
$$\varphi \left(\operatorname {E} [\boldsymbol{\theta}]\right)\leq \operatorname {E} \left[\varphi (\boldsymbol{\theta})\right].$$

We then verify that the convexity conditions are met for certain distributions that are widely used in extreme value analysis using a Taylor series approximation of the tail probability $1-F(a|\boldsymbol{\theta})$ around the MLE of $\boldsymbol{\theta}$, and plug it into the difference between the Bayesian and the frequentist estimators which is defined as
$$D(a)=P_B(X>a)-P_F(X>a).$$

In conclusion, we can show that the convexity of $\varphi(\boldsymbol{\theta})=1-F(a|\boldsymbol{\theta})$ is a sufficient condition for $P_B > P_F$.  We also verify this convexity condition for specific distributions which are widely used in extreme value theory.

\section{Methodology}
We will investigate why the Bayesian estimator of the tail probability is usually asymptotically higher than the frequentist one in this section. The method is to prove that if $\varphi (\boldsymbol{\theta})=1-F(a|\boldsymbol{\theta})$ is convex then we can apply Jensen's inequality directly. Then we use the Taylor expansion of the tail probability $1-F(a|\boldsymbol{\theta})$ to verify the results we get which leads to the same conditions on our distribution function $F(a|\boldsymbol{\theta})$.

\subsection{Convexity Investigation Using Jensen's Inequality}
For tail probability estimations, Bayesian method gives $P_B(X>a)=E^{\pi(\boldsymbol{\theta}|a)}[1-F(a|\boldsymbol{\theta})]$, while frequentist method using $P_F(X>a)=1-F(a|\hat{\boldsymbol{\theta}})$. To investigate the relation between $P_B$ and $P_F$, Jensen's inequality tells something similar and I will state formally here as:
\begin{theorem}
Let $(\Omega ,A,\mu )$ be a probability space, such that $\mu (\Omega )=1$. If $g: \Omega \to \R^d$ is a real-valued function that is $\mu$-integrable, and if $\varphi$  is a convex function on the real line, then:
$$ \varphi \left(\int _{\Omega }g\,d\mu \right)\leq \int _{\Omega }\varphi \circ g\,d\mu .$$
\end{theorem}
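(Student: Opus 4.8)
The plan is to prove the measure-theoretic form of Jensen's inequality by exploiting the defining property of convex functions: at every point a convex function lies above one of its supporting lines. First I would set $m = \int_\Omega g\,d\mu$, which is a well-defined real number since $g$ is $\mu$-integrable and $\mu(\Omega)=1$. Because $\varphi$ is convex on the real line, it admits a supporting line at $m$: there exists a constant $c$ (any value in the subdifferential of $\varphi$ at $m$, e.g. the left-hand derivative) such that
\begin{equation}
\varphi(t) \ge \varphi(m) + c\,(t - m) \qquad \text{for all } t \in \R.
\end{equation}
This inequality is the geometric heart of the argument, so I would want to state clearly why it holds — namely that convexity forces the difference quotients to be monotone, so the one-sided derivatives exist and any slope between them gives a line that never rises above the graph of $\varphi$.

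Next I would substitute $t = g(\omega)$ into the supporting-line inequality, obtaining
\begin{equation}
\varphi\bigl(g(\omega)\bigr) \ge \varphi(m) + c\,\bigl(g(\omega) - m\bigr) \qquad \text{for all } \omega \in \Omega.
\end{equation}
Both sides are measurable functions of $\omega$, and the right-hand side is $\mu$-integrable because $g$ is. I would then integrate both sides over $\Omega$ with respect to $\mu$, using monotonicity and linearity of the integral together with the normalization $\mu(\Omega)=1$:
\begin{equation}
\int_\Omega \varphi\circ g\,d\mu \;\ge\; \varphi(m)\,\mu(\Omega) + c\left(\int_\Omega g\,d\mu - m\,\mu(\Omega)\right) = \varphi(m),
\end{equation}
since the bracketed term vanishes by the very definition of $m$. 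Recalling that $m = \int_\Omega g\,d\mu$ gives exactly $\varphi\left(\int_\Omega g\,d\mu\right) \le \int_\Omega \varphi\circ g\,d\mu$, which is the claim.

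The main obstacle, and the only point requiring genuine care, is the existence of the supporting line at $m$. For a convex function on all of $\R$ this is standard, but I would need to justify it cleanly: convexity implies the left and right derivatives $\varphi'_-(m)$ and $\varphi'_+(m)$ exist and satisfy $\varphi'_-(m) \le \varphi'_+(m)$, and any $c$ in the closed interval $[\varphi'_-(m), \varphi'_+(m)]$ yields a valid supporting slope. A secondary technical point is ensuring the integral on the left-hand side is well-defined: since $\varphi\circ g$ is bounded below by the integrable function $\varphi(m) + c(g - m)$, the integral $\int_\Omega \varphi\circ g\,d\mu$ exists in $(-\infty, +\infty]$, so the inequality is meaningful even if $\varphi\circ g$ fails to be integrable. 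I expect the substitution-and-integrate steps to be entirely routine once the supporting line is in hand.
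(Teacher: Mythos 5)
Your proposal is correct and complete: the supporting-line (subgradient) argument is the standard proof of Jensen's inequality in this measure-theoretic form, and you handle the two genuine technical points properly --- the existence of a supporting slope $c \in [\varphi'_-(m), \varphi'_+(m)]$ from monotonicity of difference quotients, and the fact that $\varphi \circ g$ is measurable (convexity of $\varphi$ on all of $\R$ gives continuity) and bounded below by an integrable affine function, so its integral exists in $(-\infty, +\infty]$. For comparison: the paper does not actually prove this theorem at all; it states it as a known result, with citations to standard references, and immediately moves on to its own Theorem 2.2, where the real work of the paper lies (relating $\varphi[\hat{\boldsymbol{\theta}}]$ to $\varphi[E(\boldsymbol{\theta})]$ asymptotically). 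So there is no proof in the paper to compare against --- your argument fills in what the authors took for granted. One small remark: the theorem as stated writes $g : \Omega \to \R^d$ while calling $g$ real-valued and $\varphi$ convex on the real line; your proof treats the scalar case, which matches the intended reading, and it extends verbatim to $d > 1$ by replacing the supporting line with a supporting hyperplane, i.e.\ taking $c$ to be a subgradient vector and $c\,(t-m)$ an inner product.
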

Note here the measurable function $g$ is our parameter $\boldsymbol{\theta}$. So Jensen's inequality gives $ \varphi \left(\operatorname {E} [\boldsymbol{\theta}]\right)\leq \operatorname {E} \left[\varphi (\boldsymbol{\theta})\right]$ \cite{statstext}. The inequality we want to prove, however, is that $\varphi[\hat{\boldsymbol{\theta}}] \leq E[\varphi(\boldsymbol{\theta})]$. The following theorem and proof shows that as $a \to \infty$ we have $\varphi[\hat{\boldsymbol{\theta}}]$ and $\varphi[E(\boldsymbol{\theta})]$ are quite close to each other, which implies that $\varphi[\hat{\boldsymbol{\theta}}] \leq E[\varphi(\boldsymbol{\theta})]$.

\begin{theorem}
Let $X$ be a continuous random variable supported on semi-infinite intervals, usually $[c,\infty)$ for some $c$, or supported on the whole real line,  with $F(a|\boldsymbol{\theta})$ be the cumulative distribution function (CDF) of $X$ where $a$ is some extremely large number on the support, and $\varphi (\boldsymbol{\theta})=1-F(a|\boldsymbol{\theta})$ is a convex function. Suppose $\hat{\boldsymbol{\theta}}$ is the maximum likelihood estimation of the parameter $\boldsymbol{\theta}$, then $$\varphi[\hat{\boldsymbol{\theta}}] \leq  E[\varphi(\boldsymbol{\theta})]$$
\end{theorem}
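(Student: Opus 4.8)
The plan is to use the posterior mean $E[\boldsymbol{\theta}]$ as a bridge between the MLE value $\varphi[\hat{\boldsymbol{\theta}}]$ and the target quantity $E[\varphi(\boldsymbol{\theta})]$. First I would invoke Theorem 2.1 (Jensen's inequality) with the probability space $(\Omega, A, \mu)$ taken to be the parameter space equipped with the posterior $\pi(\boldsymbol{\theta}|a)$ and with $g$ the identity map on $\boldsymbol{\theta}$; since $\varphi(\boldsymbol{\theta}) = 1 - F(a|\boldsymbol{\theta})$ is convex by hypothesis, this yields at once
$$\varphi\left(E[\boldsymbol{\theta}]\right) \leq E[\varphi(\boldsymbol{\theta})].$$
It therefore suffices to compare $\varphi[\hat{\boldsymbol{\theta}}]$ with $\varphi(E[\boldsymbol{\theta}])$, that is, to show that replacing the posterior mean by the MLE inside $\varphi$ does not raise the value in the limit.

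The central step is to establish that the posterior mean $E[\boldsymbol{\theta}]$ and the maximum likelihood estimator $\hat{\boldsymbol{\theta}}$ coincide asymptotically. I would appeal to the Bernstein--von Mises phenomenon: under the usual regularity conditions (a smooth, identifiable likelihood with positive-definite Fisher information and a prior that is continuous and strictly positive near $\hat{\boldsymbol{\theta}}$), the posterior concentrates and becomes asymptotically Gaussian, centred at the MLE. A Laplace expansion of $\int \boldsymbol{\theta}\,\pi(\boldsymbol{\theta}|a)\,d\boldsymbol{\theta}$ about the posterior mode then shows that $E[\boldsymbol{\theta}] - \hat{\boldsymbol{\theta}} \to 0$ as the information in the data grows (here as $a\to\infty$). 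Combined with the continuity of $\varphi$, which is inherited from the continuity of the CDF $F(a|\cdot)$, this forces $\varphi[\hat{\boldsymbol{\theta}}] - \varphi(E[\boldsymbol{\theta}]) \to 0$.

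Finally I would assemble the two pieces through the decomposition
$$E[\varphi(\boldsymbol{\theta})] - \varphi[\hat{\boldsymbol{\theta}}] = \left(E[\varphi(\boldsymbol{\theta})] - \varphi(E[\boldsymbol{\theta}])\right) + \left(\varphi(E[\boldsymbol{\theta}]) - \varphi[\hat{\boldsymbol{\theta}}]\right),$$
in which the first bracket is non-negative by the Jensen step and the second bracket vanishes in the limit by the concentration step, so that $\varphi[\hat{\boldsymbol{\theta}}] \leq E[\varphi(\boldsymbol{\theta})]$ holds asymptotically as claimed.

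I expect the main obstacle to lie in the second step, namely in making ``$\hat{\boldsymbol{\theta}}$ and $E[\boldsymbol{\theta}]$ are close'' precise and, more delicately, in upgrading mere closeness to a one-sided inequality. Convergence in distribution of the rescaled posterior does not by itself yield convergence of the posterior mean; a uniform-integrability or moment-control argument is needed to justify passing the limit inside the integral defining $E[\boldsymbol{\theta}]$. Furthermore, since the Jensen gap itself shrinks to zero as $a\to\infty$ (both $\varphi[\hat{\boldsymbol{\theta}}]$ and $E[\varphi(\boldsymbol{\theta})]$ tend to $0$), I would have to verify that the error $\varphi(E[\boldsymbol{\theta}]) - \varphi[\hat{\boldsymbol{\theta}}]$ is of strictly smaller order than the non-negative gap $E[\varphi(\boldsymbol{\theta})] - \varphi(E[\boldsymbol{\theta}])$, rather than merely small in absolute terms. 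This comparison of rates is what ultimately fixes the sign, and it is precisely where a careful curvature or Taylor estimate, anticipating the next subsection, would be required.
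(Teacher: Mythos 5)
Your overall decomposition is exactly the paper's: both proofs bridge $\varphi[\hat{\boldsymbol{\theta}}]$ and $E[\varphi(\boldsymbol{\theta})]$ through the intermediate quantity $\varphi[E(\boldsymbol{\theta})]$, handle one piece with Jensen's inequality applied to the posterior, and argue that the other piece vanishes as $a \to \infty$. The genuine gap is in your central step. You justify $\varphi[\hat{\boldsymbol{\theta}}] - \varphi[E(\boldsymbol{\theta})] \to 0$ by claiming $E[\boldsymbol{\theta}] - \hat{\boldsymbol{\theta}} \to 0$ via Bernstein--von Mises, ``as the information in the data grows (here as $a\to\infty$)''. But the limit in this theorem is taken in the tail point $a$ with the sample fixed (the posterior is $\pi(\boldsymbol{\theta}|a)$, i.e.\ conditioned on the observed data); it is not a large-sample limit, and growing $a$ does not make the posterior concentrate at the MLE. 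The paper's own exponential example refutes the claim: there $E^{\pi(\lambda|{\bf x})}(\lambda-\hat{\lambda}) = \bar{x}/(n-1)$, so with $\bar{x}$ of order $a$ the posterior mean and the MLE drift \emph{apart} linearly in $a$, not together. Moreover, even if the two estimators did approach each other, ``continuity of $\varphi$'' would not close the argument, because $\varphi=\varphi_a$ changes with $a$; some uniformity in $a$ would be needed.

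What the paper does at this step requires no closeness of $\hat{\boldsymbol{\theta}}$ and $E(\boldsymbol{\theta})$ at all: since $\varphi$ is a tail probability,
$$\left|\varphi[E(\boldsymbol{\theta})] - \varphi[\hat{\boldsymbol{\theta}}]\right| \leq \int_a^\infty \left|f(x|\hat{\boldsymbol{\theta}}) - f(x|E(\boldsymbol{\theta}))\right| dx,$$
and the right-hand side tends to $0$ as $a \to \infty$ simply because the integrand is integrable (its integral over the whole line is at most $2$), so its tail integral vanishes whatever the two parameter values are. That observation is the substitute for your concentration step, and it is where your proposal and the paper genuinely part ways. Your closing worry --- that the Jensen gap $E[\varphi(\boldsymbol{\theta})] - \varphi[E(\boldsymbol{\theta})]$ itself shrinks to $0$, so the error term must be shown to be of strictly smaller order --- is well taken; the paper sidesteps it by invoking strict convexity to extract a gap $\epsilon>0$ and does not track the dependence of that $\epsilon$ on $a$, which is the weakest point of the published argument too. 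But as proposed, your proof rests on a concentration claim that is false in this asymptotic regime, so it does not go through.
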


\begin{proof}
$$\begin{aligned}
\left|\varphi[E(\boldsymbol{\theta})] - \varphi[\hat{\boldsymbol{\theta}}] \right| & = \left|\left(1-F[a|E(\boldsymbol{\theta})]\right)-\left(1-F[a|\hat{\boldsymbol{\theta}}]\right)\right|\\
& = \left|\left(1-\int^a_{-\infty} f(x|E(\boldsymbol{\theta}))dx \right)- \left(1-\int^a_{-\infty}  f(x|\hat{\boldsymbol{\theta}})dx \right)\right|\\
& =\left|\int^{\infty}_a  f(x|\hat{\boldsymbol{\theta}})dx - \int^{\infty}_a f(x|E(\boldsymbol{\theta}))dx \right|\\
& \leq \int^{\infty}_a \left|f(x|\hat{\boldsymbol{\theta}}) - f(x|E(\boldsymbol{\theta}))\right|dx
\end{aligned}$$
Let's define
$$g(x)=f(x|\hat{\boldsymbol{\theta}}) - f(x|E(\boldsymbol{\theta}))$$
Then we can see that
$$\begin{aligned}
\int^{\infty}_{-\infty}|g(x)| dx 
& \leq \int^{\infty}_{-\infty}\left|f(x|\hat{\boldsymbol{\theta}})\right| dx + \int^{\infty}_{-\infty}\left|f(x|E(\boldsymbol{\theta}))\right|dx \\
& =\int^{\infty}_{-\infty}f(x|\hat{\boldsymbol{\theta}}) dx + \int^{\infty}_{-\infty}f(x|E(\boldsymbol{\theta}))dx \\
& = 1+1 =2 < \infty.
\end{aligned}$$
Which means that $g(x)$ is a integrable function. Thus implies $\lim_{a \to \infty} \int^{\infty}_a |g(x)|dx =0 $, i.e
$$\lim_{a \to \infty} \int^{\infty}_a \left|f(x|\hat{\boldsymbol{\theta}}) - f(x|E(\boldsymbol{\theta}))\right|dx =0 $$
Thus, for $\forall \epsilon >0, \exists a$ such that
$$\int^{\infty}_a \left|f(x|\hat{\boldsymbol{\theta}}) - f(x|E(\boldsymbol{\theta}))\right|dx < \epsilon .$$
Which implies $\exists a$ such that $$|\varphi[\hat{\boldsymbol{\theta}}] - \varphi[E(\boldsymbol{\theta})]| < \epsilon .$$
Thus $- \epsilon < \varphi[\hat{\boldsymbol{\theta}}] -\varphi[E(\boldsymbol{\theta})]  < \epsilon $ or we could write
$$\varphi[E(\boldsymbol{\theta})] - \epsilon <  \varphi[\hat{\boldsymbol{\theta}}] < \varphi[E(\boldsymbol{\theta})] + \epsilon .$$
Equality in Jensen's inequality holds only if our function $\varphi$ is essentially constant,
and suppose our function $\varphi(\boldsymbol{\theta})$ is strictly convex, which is true for most of the cases that we encounter, then we know our Jensen's inequality is strict also, i.e. $\varphi[E(\boldsymbol{\theta})] < E[\varphi(\boldsymbol{\theta})]$. Which implies $\exists \epsilon >0$ such that $$E[\varphi(\boldsymbol{\theta})] \geq \varphi[E(\boldsymbol{\theta})] +\epsilon .$$ 
Hence for this $\epsilon$, as $a \to \infty$ we have
$$E[\varphi(\boldsymbol{\theta})] \geq \varphi[E(\boldsymbol{\theta})] +\epsilon > \varphi[\hat{\boldsymbol{\theta}}].$$
\end{proof}

\subsection{Taylor Expansion Examination}
In this section, we will use Taylor series for the tail probability to check the results we got in the previous section. Let $$D(a)=P_B(X>a)-P_F(X>a)=\int_{\boldsymbol{\Theta}}[1-F(a|\boldsymbol{\theta})]\pi(\boldsymbol{\theta}|a)d\boldsymbol{\theta}-\left(1-F(a|\hat{\boldsymbol{\theta}})\right)$$
which is the difference of the tail probabilities between the Bayesian and the frequentist estimators. The Taylor series of $1-F(a|\boldsymbol{\theta})$ at the MLE of $\boldsymbol{\theta}$, $\hat{\boldsymbol{\theta}}$, is given as
\begin{eqnarray}
\small
\varphi(\boldsymbol{\theta}) & = & 1-F(a|\boldsymbol{\theta})\\
& = & 1-F(a|\hat{\boldsymbol{\theta}})-\left. \nabla_{\boldsymbol{\theta}} F(a|\boldsymbol{\theta}) \right|_{\boldsymbol{\theta} =\hat{\boldsymbol{\theta}}}(\boldsymbol{\theta}-\hat{\boldsymbol{\theta}})-\frac{1}{2}\left. H_{\boldsymbol{\theta}}\left( F(a|\boldsymbol{\theta})\right) \right|_{\boldsymbol{\theta}=\hat{\boldsymbol{\theta}}}(\boldsymbol{\theta}-\hat{\boldsymbol{\theta}})^2-R(\boldsymbol{\theta})\\
R(\boldsymbol{\theta})&=&\frac{1}{6}\left.D^3_{\boldsymbol{\theta}}\left( F(a|\boldsymbol{\theta})\right) \right|_{\boldsymbol{\theta}={\boldsymbol{\theta}_L}}(\boldsymbol{\theta}-\hat{\boldsymbol{\theta}})^3 \text{ where }\boldsymbol{\theta}_L \text{ is between }\boldsymbol{\theta} \text{ and }\hat{\boldsymbol{\theta}}.
\end{eqnarray}
where $\nabla_{\boldsymbol{\theta}}F(a|\boldsymbol{\theta})$ is the gradient of $F(a|\boldsymbol{\theta})$ such that
$$\left(\nabla_{\boldsymbol{\theta}}F(a|\boldsymbol{\theta})\right)_i=\frac{\partial}{\partial \theta_i}F(a|\boldsymbol{\theta})$$
and $H_{\boldsymbol{\theta}}$ is the Hessian matrix of dimension $|\boldsymbol{\theta}|\times |\boldsymbol{\theta}|$ such that $$H_{ml}=\frac{\partial^2}{\partial \theta_m \partial \theta_l}F(a|\boldsymbol{\theta}) $$
And $D^3_{\boldsymbol{\theta}}\left( F(a|\boldsymbol{\theta})\right)$ is the third partial derivative of $F(a|\boldsymbol{\theta})$ w.r.t. $\boldsymbol{\theta}$ in a similar manner. Then $D(a)$ could be rewritten as
$$\small \begin{aligned}
D(a) 
& =\int_{\boldsymbol{\Theta}}[1-F(a|\boldsymbol{\theta})]\pi(\boldsymbol{\theta}|a)d\boldsymbol{\theta}-\left(1-F(a|\hat{\boldsymbol{\theta}})\right)\\
& = -\left.\nabla_{\boldsymbol{\theta}} F(a|\boldsymbol{\theta}) \right|_{\hat{\boldsymbol{\theta}}} \int_{\Theta}\pi(\boldsymbol{\theta}|a)(\boldsymbol{\theta}-\hat{\boldsymbol{\theta}})d\boldsymbol{\theta} - \frac{1}{2}\left. H_{\boldsymbol{\theta}}\left( F(a|\boldsymbol{\theta})\right) \right|_{\hat{\boldsymbol{\theta}}} \int_{-\infty}^\infty\pi(\boldsymbol{\theta}|a)(\boldsymbol{\theta}-\hat{\boldsymbol{\theta}})^2d\boldsymbol{\theta} -R^*(\boldsymbol{\theta})\\
& = -\left.\nabla_{\boldsymbol{\theta}} F(a|\boldsymbol{\theta}) \right|_{\hat{\boldsymbol{\theta}}} E^{\pi(\boldsymbol{\theta}|a)}(\boldsymbol{\theta}-\hat{\boldsymbol{\theta}}) - \frac{1}{2}\left. H_{\boldsymbol{\theta}}\left( F(a|\boldsymbol{\theta})\right) \right|_{\hat{\boldsymbol{\theta}}}E^{\pi(\boldsymbol{\theta}|a)}(\boldsymbol{\theta}-\hat{\boldsymbol{\theta}})^2 -R^*(\boldsymbol{\theta})
\end{aligned}$$
Here we simplify the notation by writing $d F(a|\boldsymbol{\theta})/ d \boldsymbol{\theta}\left.\right|_{\boldsymbol{\theta}=\hat{\boldsymbol{\theta}}}=F'(a|\hat{\boldsymbol{\theta}})$, $d^2 F(a|\boldsymbol{\theta})/ d \boldsymbol{\theta}^2 \left.\right|_{\boldsymbol{\theta}=\hat{\boldsymbol{\theta}}}=F''(a|\hat{\boldsymbol{\theta}})$, and
$$R^*(\boldsymbol{\theta})= \frac{1}{6}\left.D^3_{\boldsymbol{\theta}}\left( F(a|\boldsymbol{\theta})\right) \right|_{{\boldsymbol{\theta}_L}}\int_{-\infty}^\infty\pi(\boldsymbol{\theta}|x)(\boldsymbol{\theta}-\hat{\boldsymbol{\theta}})^3d\boldsymbol{\theta}=\frac{1}{6}\left.D^3_{\boldsymbol{\theta}}\left( F(a|\boldsymbol{\theta})\right) \right|_{{\boldsymbol{\theta}_L}} E^{\pi(\boldsymbol{\theta}|x)}(\boldsymbol{\theta}-\hat{\boldsymbol{\theta}})^3$$
In order for $\varphi(\boldsymbol{\theta})$ to be convex, and $D(a)$ to be negative we would need the first term\\ $\left.\nabla_{\boldsymbol{\theta}} F(a|\boldsymbol{\theta}) \right|_{\hat{\boldsymbol{\theta}}} E^{\pi(\boldsymbol{\theta}|a)}(\boldsymbol{\theta}-\hat{\boldsymbol{\theta}})$ and the third term $R^*(\boldsymbol{\theta})$ to go to zero asymptotically, and the second term $\left. H_{\boldsymbol{\theta}}\left( F(a|\boldsymbol{\theta})\right) \right|_{\hat{\boldsymbol{\theta}}}E^{\pi(\boldsymbol{\theta}|a)}(\boldsymbol{\theta}-\hat{\boldsymbol{\theta}})^2$ to be negative as $a \to \infty$. In what follows, we will show some examples with specific distributions that are widely used in extreme value analysis where this happens. We conjecture that this is true for a broad set of cumulative distribution functions, since it worked in all the examples we tried. This would be an interesting open problem to solve in the future.

\begin{example} {\bf Exponential distribution}\\
The density of the exponential distribution is given by
$$f(x|\lambda) = \frac{1}{\lambda}e^{-x/\lambda}$$
where $x \ge 0$ and $\lambda > 0$. So the tail probability is
$$1-F(a|\lambda)=\int^{\infty}_a f(x|\lambda)dx=e^{-\frac{a}{\lambda}}$$
Taking derivative with respect to $\lambda$ at both sides we have
$$
-\frac{d }{d \lambda} F(a|\lambda)=\frac{a}{\lambda^2}e^{-\frac{a}{\lambda}};\ \
-\frac{d^2 }{d \lambda^2} F(a|\lambda) =\left(-\frac{2a}{\lambda^3}+\frac{a^2}{\lambda^4}\right)e^{-\frac{a}{\lambda}}; \ \
-\frac{d^3 }{d \lambda^3} F(a|\lambda) =\left(\frac{6a}{\lambda^4}-\frac{6a^2}{\lambda^5}+\frac{a^3}{\lambda^6}\right)e^{-\frac{a}{\lambda}}
$$
Suppose we have i.i.d sample ${\bf x}=(x_1, ..., x_n)$ from $f(x|\lambda)$, then the marginal distribution can be calculated as 
\begin{equation*}
m({\bf x})=\int^{\infty}_0f({\bf x}|\lambda)\pi_J(\lambda)d\lambda = \Gamma(n)\Big(\sum_{i=1}^{n} x_i \Big) ^{-n},
\end{equation*}
where we use Jeffereys prior as $\pi_J(\lambda) \propto 1/ \lambda$.
By which the posterior distribution is obtained as
$$\pi(\lambda|{\bf x})=\frac{1}{\lambda^{(n+1)}\Gamma(n)}\exp\left(-\frac{\sum^n_{i=1}x_i}{\lambda}\right)\left(\sum^n_{i=1}x_i\right)^n$$
After some arithmetic manipulation and using $\hat{\lambda}=\bar{x}$ we obtain
$$\begin{aligned}
E^{\pi(\lambda|{\bf x})}(\lambda-\hat\lambda) & = \int_{0}^{\infty}(\lambda-\hat\lambda) \pi(\lambda|{\bf x}) d\lambda=\frac{ \bar x}{(n-1)},\\
E^{\pi(\lambda|{\bf x})}(\lambda-\hat\lambda)^2 & = 
\int_{0}^{\infty}(\lambda-\hat\lambda)^2 \pi(\lambda|{\bf x}) d\lambda
=\bar x^2  \frac{n+2}{(n-1)(n-2)},\\
E^{\pi(\lambda|{\bf x})}(\lambda-\hat\lambda)^3 & = 
\int_{0}^{\infty}(\lambda-\hat\lambda)^3 \pi(\lambda|{\bf x}) d\lambda
 =\bar x^3  \frac{7n+6}{(n-1)(n-2)(n-3)}.
\end{aligned}$$
Plug these terms into $D(a)$ we have
$$\small \begin{aligned}
D(a) & = -\left.\frac{d }{d \lambda} F(a|\lambda)\right|_{\hat{\lambda}}E^{\pi(\lambda|{\bf x})}(\lambda-\hat{\lambda}) - \frac{1}{2}\left.\frac{d^2 }{d \lambda^2} F(a|\lambda)\right|_{\hat{\lambda}} E^{\pi(\lambda|{\bf x})}(\lambda-\hat{\lambda})^2 -\frac{1}{6}\left.\frac{d^3 }{d \lambda^3} F(a|\lambda)\right|_{{\lambda}_L}E^{\pi(\lambda|{\bf x})} (\lambda-\hat{\lambda})^3\\
& =e^{-\frac{a}{\bar{x}}}\left[\frac{a}{\bar{x}}\frac{-4}{(n-1)(n-2)}+\frac{a^2}{2\bar{x}^2}\frac{n+2}{(n-1)(n-2)}\right]+e^{-\frac{a}{\lambda_L}}\left(\frac{a^3-6\lambda_La^2+6\lambda_L^2a}{6\lambda_L^6}\right)\frac{\bar{x}^3(7n+6)}{(n-1)(n-2)(n-3)}
\end{aligned}$$
Here we have that $a > >0$ , and $\bar{x} \geq 0$; $\lambda_L$ is some number between $x$ and $\bar{x}$ so $\lambda_L \geq 0$. All of which implies $D(a) \geq 0$.
\\
Then, we want to show that $\lim_{a \to \infty}R^*(\lambda)=0$, it is sufficient to show that $$\lim_{a \to \infty}\left.\frac{d^3 F(a|\lambda)}{d \lambda^3}\right|_{\lambda_L}=0.$$
And we could obtain this simply by using L'Hospital's rule. In conclusion, the second derivatives for exponential distribution $\exp(\lambda)$ w.r.t. $\lambda$ is
$$\begin{aligned}
\frac{d^2}{d\lambda^2}F(a|\lambda) & =\left(2-\frac{a}{\lambda}\right)\frac{a}{\lambda^3}e^{-\frac{a}{\lambda}}
\end{aligned}$$
Since $a$ is assumed to be some extreme number, which implies $d^2 F(a|\lambda) / d\lambda^2 \leq 0$, i.e. the tail probability $\varphi(\lambda)=1-F(a|\lambda)$ is convex.

\end{example}

\begin{example} {\bf Pareto Distribution}\\
Given the scale parameter $\beta=1$ and the shape parameter $\alpha$ as unknown, the pdf of the Pareto distribution is given by
$$f(x|\alpha) = \alpha x^{-\alpha - 1}$$
where $x \ge 1$ and $0 < \alpha < 1$, and the cumulative distribution function is
$$F(x|\alpha) =\int^{x}_1 f(t|\alpha)dt = 1 - x^{-\alpha}, x \geq 1 $$
By setting the derivative of the log-likelihood equal to zero we get the MLE of $\alpha$ as
$\hat{\alpha}=n/ \sum^n_{i=1}\log x_i$. We are interested in calculating the tail probability when $b$ is extremely large. Note that
$$\varphi(\alpha)=1 - F(b|\alpha) = b^{-\alpha} $$
Taking derivatives of $\varphi(\alpha)$ with respect to $\alpha$ we obtain
$$
-\frac{d }{d \alpha}F(b|\alpha) =-b^{-\alpha} \ln b; \ \
-\frac{d^2 }{d \alpha^2}F(b|\alpha) =b^{-\alpha} (\ln b)^2; \ \
-\frac{d^3 }{d \alpha^3}F(b|\alpha) = -b^{-\alpha} (\ln b)^3.
$$
Using Jeffreys's prior $\pi_J(\alpha) \propto 1/ \alpha$, we have
$$
m({\bf x}) =\int^1_0 f({\bf x}|\alpha) \pi_J(\alpha) d \alpha
= \frac{\Gamma(n,0)-\Gamma(n,\sum^n_{i=1}\ln x_i)}{\prod^n_{i=1}x_i \left[\sum^n_{i=1}\ln x_i\right]^n},
$$
where the upper incomplete gamma function is defined as $\Gamma (s,x)=\int _{x}^{\infty }t^{s-1}e^{-t}dt$. Then the posterior distribution is given by
$$\begin{aligned}
\pi(\alpha|{\bf x}) & =\frac{L(\alpha|{\bf x})\pi_J(\alpha)}{m({\bf x})}=\frac{\alpha^{n-1}\left(\prod^n_{i=1}x_i\right)^{-\alpha}\left[\sum^n_{i=1}\ln x_i \right]^n}{\Gamma(n,0)-\Gamma(n,\sum^n_{i=1}\ln x_i)}
\end{aligned}$$
Using the properties of the incomplete gamma function, and integration by parts we find the recurrence relation $\Gamma(s+1,x)=s\Gamma(s,x)+x^se^{-x}$. We obtain
$$\small \begin{aligned}
E^{\pi(\alpha|{\bf x})}(\alpha-\hat{\alpha}) & =\int^1_0 (\alpha-\hat{\alpha}) \pi(\alpha|{\bf x}) d \alpha = -\frac{\left[\sum^n_{i=1}\ln x_i \right]^{n-1}}{\prod^n_{i=1}x_i[\Gamma(n,0)-\Gamma(n,\sum^n_{i=1}\ln x_i)]},\\
E^{\pi(\alpha|{\bf x})}(\alpha-\hat{\alpha})^2 & =\frac{n}{\left[\sum^n_{i=1}\ln x_i \right]^2}+\frac{(n-1)(\sum^n_{i=1}\ln x_i)^{n-2}-(\sum^n_{i=1}\ln x_i)^{n-1}}{\prod^n_{i=1}x_i[\Gamma(n,0)-\Gamma(n,\sum^n_{i=1}\ln x_i)]},\\
E^{\pi(\alpha|{\bf x})}(\alpha-\hat{\alpha})^3 &=\frac{2n}{(\sum^n_{i=1}\ln x_i)^3}-\frac{(\sum^n_{i=1}\ln x_i)^{n-3}[n^2+2+(2-2n)(\sum^n_{i=1}\ln x_i)+(\sum^n_{i=1}\ln x_i)^2]}{(\prod^n_{i=1}x_i)[\Gamma(n,0)-\Gamma(n,\sum^n_{i=1}\ln x_i)]}.
\end{aligned}$$
To show $D(b)\geq 0$, is equivalent to show that the first term in the expression of $D(b)$ after plugging in the Taylor expansion of $1-F(b|\alpha)$ goes to zero as $b \to \infty$, which could be obtain by using the L'Hospital's rule. And we also need to show the second term $\left.d^2 /d \alpha^2 F(b|\alpha)\right|_{\hat{\alpha}}E^{\pi(\alpha|{\bf x})}(\alpha-\hat{\alpha})^2$ is asymptotically negative. We can see this from the fact that $\left.d^2 /d \alpha^2 F(b|\alpha)\right|_{\hat{\alpha}}=-b^{-\alpha} (\ln b)^2 \leq 0$ and $E^{\pi(\alpha|{\bf x})}(\alpha-\hat{\alpha})^2 \geq 0$.
\\ \\
Then, We want to show that $\lim_{b \to \infty}R^*(\alpha)=0$, it is sufficient to show that $$\lim_{b \to \infty}\left.\frac{d^3}{d \alpha^3} F(b|\alpha)\right|_{\alpha_L}=0.$$
And we could obtain this simply by using L'Hospital's rule. In conclusion, the second derivatives for Pareto distribution is
$$\begin{aligned}
\frac{d^2}{d \alpha^2}F(b|\alpha) & =-b^{-\alpha}(\ln b)^2
\end{aligned}$$
Since $b$ is assumed to be some extreme number, which implies $d^2 /d \alpha^2 F(b|\alpha) \leq 0$, i.e. the tail probability $\varphi(\alpha)=1-F(b|\alpha)$ is convex.
\end{example}

\begin{example}{\bf Normal Distribution}\\
Normal distribution with unknown standard deviation $\sigma$ and expectation $\mu$ is a case where the parameter is a two dimensional vector, i.e. $\boldsymbol{\theta}=(\mu, \sigma)$. Since $x|\mu, \sigma \sim N(\mu,\sigma^2)$, then the CDF when $x=a$ is
$$F(a|\mu, \sigma)=\frac{1}{2}\left[1+\erf\left(\frac{a-\mu}{\sigma\sqrt{2}}\right)\right]=\frac{1}{2}+\frac{1}{\sqrt{\pi}}\int^{\frac{a-\mu}{\sigma\sqrt{2}}}_0e^{-t^2}dt$$
where $\erf(x)$ is the related error function defined as $\erf(x)=2 / \sqrt{\pi} \int^x_0e^{-t^2}dt$. 
Looking at the Hessian matrix, we have
$$\begin{aligned}H &=\displaystyle{\begin{bmatrix} \frac{\partial^2}{\partial \mu^2}F(a|\mu, \sigma) & \frac{\partial^2}{\partial \sigma \partial \mu}F(a|\mu, \sigma) \\ \frac{\partial^2}{\partial \mu \partial \sigma}F(a|\mu, \sigma) & \frac{\partial^2}{\partial \sigma^2}F(a|\mu, \sigma) \end{bmatrix}}\\
& = \begin{bmatrix} -\frac{a-\mu}{\sqrt{2\pi}\sigma^3}e^{-\left(\frac{a-\mu}{\sigma\sqrt{2}}\right)^2} & -\frac{1}{\sqrt{2\pi}\sigma^2}e^{-\left(\frac{a-\mu}{\sigma\sqrt{2}}\right)^2}\left[\frac{(a-\mu)^2}{\sigma^2}-1\right]\\
-\frac{1}{\sqrt{2\pi}\sigma^2}e^{-\left(\frac{a-\mu}{\sigma\sqrt{2}}\right)^2}\left[\frac{(a-\mu)^2}{\sigma^2}-1\right] &
-\frac{a-\mu}{\sqrt{2\pi}\sigma^3}e^{-\left(\frac{a-\mu}{\sigma\sqrt{2}}\right)^2}\left[\frac{(a-\mu)^2}{\sigma^2}-2\right]
\end{bmatrix}
\end{aligned}$$
To show $H$ is negative definite for large $a$, we need to show for $\forall {\bf v}^T=(v_1,v_2)\neq {\bf 0}$, we have 
$${\bf v}^TH{\bf v} <0.$$
By tedious calculation we have 
$${\bf v}^TH{\bf v}=-\frac{1}{\sqrt{2\pi}\sigma^2}e^{-\left(\frac{a-\mu}{\sigma\sqrt{2}}\right)^2}\left[ (v_1^2-2v_2^2)\left(\frac{a-\mu}{\sigma}\right)+2v_1v_2\left(\frac{a-\mu}{\sigma}\right)^2+v_2^2\left(\frac{a-\mu}{\sigma}\right)^3-2v_1v_2 \right]$$
Since $a$ is assumed to be some extreme large number, so $a-\mu>0$, then the leading term in the bracket is $$v_2^2\left(\frac{a-\mu}{\sigma}\right)^3$$
which is positive. Hence, ${\bf v}^TH{\bf v} <0$, i.e. $H$ is negative definite for large $a$ as we expected. In other words, $\varphi(\mu, \sigma)=1-F(a|\mu, \sigma)$ is a convex function.
\end{example}

\begin{figure}
  \begin{subfigure}{5cm}
    \centering\includegraphics[width=5cm]{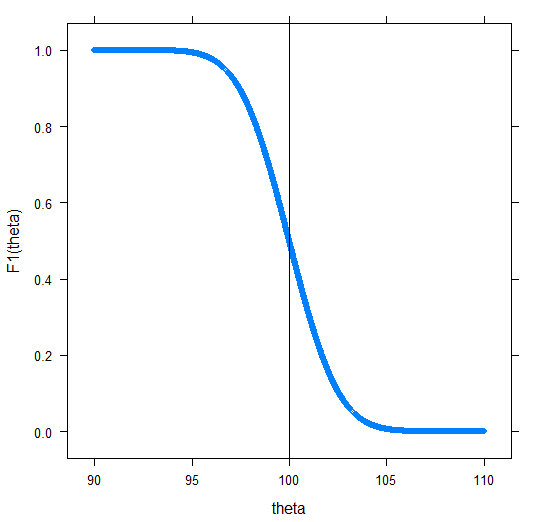}
    \caption{Normal Distribution}
  \end{subfigure}
  \begin{subfigure}{5cm}
    \centering\includegraphics[width=5cm]{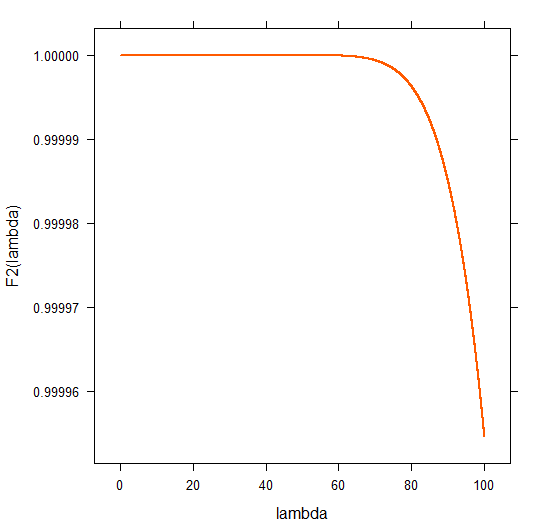}
    \caption{Exponential Distribution}
  \end{subfigure}
  \begin{subfigure}{5cm}
    \centering\includegraphics[width=5cm]{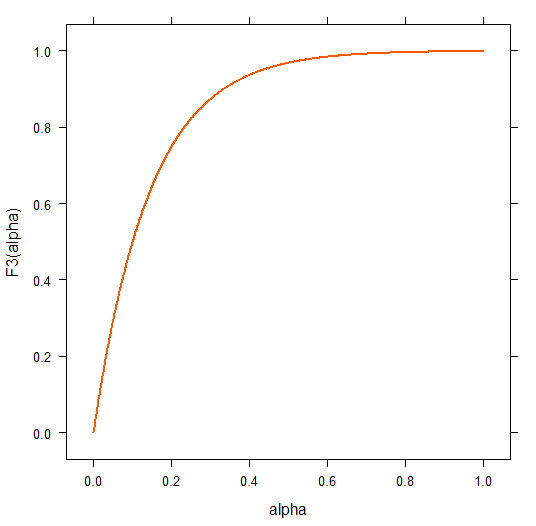}
    \caption{Pareto Distribution}
  \end{subfigure}
\caption{Plots of distribution functions when parameters are variables. Note in (a), the parameter $\theta$ is the mean of the normal distribution, i.e. in this case $\sigma$ is given and we can see that $F(a|\theta)$ is concave down when $a>\theta$.  }
\end{figure}

\section{Conclusions}
Bayesian and frequentist estimations of the tail probability sometimes give conclusions of huge differences which can cause serious consequences. Thus in practice, we will have to take both of the results into consideration. To be specific, Bayesian method always estimate the tail probability higher than the frequentist model. The Bayesian estimation of probability of tails is well founded on Probability Theory: It is a marginal computation that integrates out the parameters of the tail. On the other hand, the frequentist estimation is an approximation.

By looking at the Taylor expansion of the tail and investigate the convexity of the distribution function, we claim that the Bayesian estimator for tail probability being higher than the frequentist estimator depends on how $\varphi(\boldsymbol{\theta})=1-F(a|\boldsymbol{\theta})$ is shaped. The condition that $\varphi(\boldsymbol{\theta})$ is a strictly convex function is equivalent to  $H_{\boldsymbol{\theta}}F(a|\boldsymbol{\theta})<0$. Other examples (only continuous cases with infinite support) like the Cauchy Distribution, Logistic Distribution, Log-normal Distribution, Double Exponential Distribution, Weibull Distribution, etc., also satisfy our convexity conditions here.

However, in general convexity is a much stronger argument than Jensen's inequality, i.e. $\varphi(\boldsymbol{\theta})=1-F(a|\boldsymbol{\theta})$ is a convex function, or equivalently $H_{\boldsymbol{\theta}}F(a|\boldsymbol{\theta})<0$ is only a sufficient condition for Jensen's inequality to hold but not a necessary condition. There are distributions with $H_{\boldsymbol{\theta}}F(a|\boldsymbol{\theta}) \geq 0$ but we still have the Bayesian estimator for the tail probability is higher than the frequentist approximation. In \cite{iffjensen}, they found conditions on the random variable to make the other direction work which will be discussed in our future work.

\section{Acknowledgment}
We grateful thank the Royal Statistical Society for giving us the permission to reuse the Figure \ref{cp1} and \ref{cp2} of Coles and Pericchi \cite{cp} Applied Statistics(2003), Volume52, Issue4, Pages 405-416, published by Wiley.

%
%
%
%
%


\end{document}